\newcolumntype{x}[1]{>{\centering\arraybackslash}p{#1}}
\newtheorem{thm}{Theorem}
\newtheorem*{thm*}{Theorem}
\newtheorem*{prop*}{Proposition}
\newtheorem*{lemma*}{Lemma}
\newtheorem{cor}[thm]{Corollary}
\newtheorem*{cor*}{Corollary}
\newtheorem*{cj*}{Conjecture}
\newtheorem*{Def*}{Definition}
\newtheorem*{question*}{Question}
\newtheorem*{problem*}{Problem}
\theoremstyle{definition}
\newtheorem{rem}[thm]{Remark}
\newcommand{\bb}{\begin{equation}\begin{aligned}\hspace{0pt}}
		\newcommand{\bbb}{\begin{equation*}\begin{aligned}}
				\newcommand{\ee}{\end{aligned}\end{equation}}
		\newcommand{\eee}{\end{aligned}\end{equation*}}
\newcommand{\eqt}[1]{\stackrel{\mathclap{\mbox{\scriptsize #1}}}{=}}
\newcommand{\leqt}[1]{\stackrel{\mathclap{\mbox{\scriptsize #1}}}{\leq}}
\newcommand{\ketbra}[1]{\ket{#1}\!\!\bra{#1}}
\newcommand{\ketbraa}[2]{\ket{#1}\!\!\bra{#2}}
\renewcommand{\epsilon}{\varepsilon}
\newcommand{\id}{\mathbbm{1}}
\DeclareMathOperator{\Tr}{Tr}
\DeclareMathAlphabet{\pazocal}{OMS}{zplm}{m}{n}
\newcommand{\WW}{\pazocal{W}}
\newcommand{\modifica}[1]{#1}               
\newcommand{\cancella}[1]{}               
\begin{document}

\title{Zero-Error List Decoding for Classical-Quantum Channels
\thanks{LL and FG acknowledge financial support from the European Union (ERC StG ETQO, Grant Agreement no.\ 101165230).}
}

\author{\IEEEauthorblockN{Marco Dalai}
\IEEEauthorblockA{\textit{Department of Information Engineering} \\
\textit{University of Brescia}\\
Brescia, Italy \\
marco.dalai@unibs.it}
\and
\IEEEauthorblockN{Filippo Girardi}
\IEEEauthorblockA{
\textit{Scuola Normale Superiore}\\
Pisa, Italy \\
filippo.girardi@sns.it}
\and
\IEEEauthorblockN{Ludovico Lami}
\IEEEauthorblockA{
\textit{Scuola Normale Superiore}\\
Pisa, Italy \\
ludovico.lami@gmail.com}
}

\maketitle

\begin{abstract}
The aim of this work is to study the zero-error capacity of pure-state classical-quantum channels in the setting of list decoding. We provide an achievability bound for list-size two and a converse bound holding for every fixed list size. The two bounds coincide for channels whose pairwise absolute state overlaps form a positive semi-definite matrix. Finally, we discuss a remarkable peculiarity of the classical-quantum case: differently from the fully classical setting, the rate at which the sphere-packing bound diverges might not be achievable by zero-error list codes, even when we take the limit of fixed but arbitrarily large list size.
\end{abstract}

\begin{IEEEkeywords}
list decoding, zero-error capacity, classical-quantum channels, sphere-packing bound.
\end{IEEEkeywords}

\section{Introduction}

In channel coding, list decoding was introduced as a relaxation of ordinary decoding, allowing the decoder to output a list of candidate messages rather then a single one. The idea was studied extensively and has profound implications in the classical setting, both on the theoretical side and for practical applications; see, for example, Ref.~\cite{shanno-gallag-berlek-1967, forney-1968, merhav-2014,tal-vardy-2015, polyan-2016, gurusw-2007} just to mention a few. Extensions to the quantum setting have been rare, with only few results available in the literature; see, for example, Ref.~\cite{hayash-2006,bergamaschi2024,leung-smith-2008}. In this paper, we consider the zero-error setting. In the classical case, the study of zero-error list codes was initiated by Elias \cite{elias-1988} and has since then been of interest also due to connections with theoretical computer science; see \cite{fredma-komlos-1984,korner-1986,ahlswe-cai-zhang-1996,dellaf-costa-dalai-2022}. In the classical-quantum setting, several results on zero-error codes with \emph{ordinary} decoding have been obtained in recent years \cite{medeir-alleau-cohen-2006, beigi-2010, cubitt-leung-matthe-2011, duan-severi-winter-2013, duan-winter-2016}. However, the zero-error \emph{list-decoding} problem has not been investigated yet. In this paper, we derive some first results on the zero-error list-decoding capacity of pure-state channels.

\subsection{Notation and preliminaries.} 
Let $\mathcal{H}$ be a finite-dimensional Hilbert space, and let $\mathcal{D}(\mathcal{H})$ be the set of density matrices (or \emph{states}), i.e.\ positive semi-definite operators $\rho:\mathcal{H}\to \mathcal{H}$ with trace one. Let $\mathcal{X}$ be a finite set of input symbols. A \emph{classical-quantum (CQ) channel} $\WW$ is a map $\WW:\mathcal{X}\to \mathcal{D}(\mathcal{H})$; we simply write it as $x\mapsto \rho_x$. If, for all $x\in\mathcal{X}$, the state $\rho_x$ is a rank-one projector, namely $\rho_x=\ketbra{\psi_x}$ for some $\ket{\psi_x}\in \mathcal{H}$, we say that $\WW$ is a \emph{pure-state CQ channel}, which is uniquely identified by the map $x\mapsto \ket{\psi_x}$. When considering $n$ uses of a pure-state CQ channel, we assume the associated finite input set to be $\mathcal{X}^n$, the output set to be $\mathcal{D}(\mathcal{H}^{\otimes n})$ and the map to be $\WW^{\otimes n}$ defined for any $\bm{x}=(x_1,x_2,\ldots,x_n)\in\mathcal{X}^n$ by
\vspace{-3pt}
\bb
    \bm{x} \quad \mapsto \quad & \rho_{\bm{x}} \coloneqq  \rho_{x_1}\otimes \rho_{x_2}\otimes \dots\otimes \rho_{x_n}\, .
\ee

Let $M\geq 1$ and $L\geq 1$ be two integers. We use the notation $[M]=\{1,2\ldots,M\}$ and define
\bb
\mathscr{P}_L([M])\coloneqq \{S\subseteq[M]:\,|S|\leq L\}
\ee
Given a CQ channel $\WW$, an $L$-list coding scheme \modifica{with $M$ codewords and block-length $n$ is defined by a codebook $\mathcal{C}=\{\bm{x}_1,\bm{x}_2,\ldots, \bm{x}_M\}\subset \mathcal{X}^n$ and a POVM with operators $E_{\ell}$, $\ell\in\mathscr{P}_L([M])$. The coding scheme is zero-error if 
$$
\sum_{\ell\not \ni i} \Tr[\rho_{\bm{x}_i}E_\ell]=0\,,\qquad \forall i=1\ldots M\,. 
$$
For $L\geq 1$, 
the \emph{zero-error $L$-list classical capacity of $\WW$} is 
\vspace{-3pt}
\bb
C_{0,L}(\WW)\coloneqq \limsup_{n\to\infty} \max \frac{1}{n}\log \frac ML\,, 
\label{L_list_ZE_capacity}
\ee
the maximum being over $L$-list zero-error codes of length $n$.
}

\subsection{Sphere-packing bound and list-decoding.} 
For $L=1$, the quantity in~\eqref{L_list_ZE_capacity} is the ordinary zero-error capacity $C_0(\WW)$ of the classical-quantum channel. An upper bound on this quantity was discussed in~\cite{dalai-2013} as a consequence of the sphere packing bound; it takes the form
\bb
C_0(\WW) \coloneqq C_{0,1}(\WW) \leq R_\infty(\WW)\, ,
\ee
where $R_\infty$ is the rate at which the sphere-packing curve diverges. This was shown in~\cite[eq. (66)]{dalai-2013} to be expressible in the form 
\vspace{-5pt}
\bb
R_\infty (\WW) & = \min_{\sigma\in\mathcal{D}(\mathcal{H})} \max_x \; \log\frac{1}{\Tr[\Pi_{\rho_x}\sigma]}\, ,
\label{eq:Rinftyminmax}
\ee
where $\Pi_{\rho_x}$ is the projector onto the support of $\rho_x$.
Although this was not mentioned in~\cite{dalai-2013}, precisely as in the classical case~\cite{shanno-gallag-berlek-1967}, the sphere packing bound also holds for list decoding, and $R_\infty(\WW)$ is thus also a bound on the zero-error capacity with list decoding $C_{0,L}(\WW)$ for any $L$. We provide here a self-contained proof of this statement.

\begin{thm}\label{th:C_{0L}<=Rinfty} For any classical-quantum channel $\WW$ and any $L\geq 1$, we have
	\vspace{-5pt}
    \bb
        C_{0,L}(\WW)\leq R_\infty (\WW)\, .
    \ee
\end{thm}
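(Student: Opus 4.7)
The plan is to upper bound the one-shot rate $C_{0,L}^{(1)}(\WW^{\otimes n})$ by $n R_\infty(\WW)$ and then let $n \to \infty$. The bound will be obtained by combining an operator-theoretic consequence of the zero-error condition with the list-size constraint on the POVM, and then tensorising a dual state that attains the minimum in~\eqref{eq:Rinftyminmax}.

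Given a zero-error $L$-list $M$-code $(\pazocal E, \pazocal D)$ for $\WW^{\otimes n}$ with POVM elements $\{E_\ell\}_{\ell \in \mathscr{P}_L([M])}$, I associate to each $m \in [M]$ the accumulated operator $E^{(m)} \coloneqq \sum_{\ell \ni m} E_\ell$, so that $0 \leq E^{(m)} \leq \id$ and $\Tr[E^{(m)} \rho_{\pazocal E(m)}] = 1$. The first step will be to upgrade this trace equality to the operator inequality $E^{(m)} \geq \Pi_{\rho_{\pazocal E(m)}}$: writing $\Tr[(\id - E^{(m)}) \rho_{\pazocal E(m)}] = 0$ with both factors positive semi-definite, a short computation yields $\sqrt{\id - E^{(m)}}\, \sqrt{\rho_{\pazocal E(m)}} = 0$, which forces $E^{(m)}$ to act as the identity on the range of $\rho_{\pazocal E(m)}$.

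The second step will exploit the list-size constraint: because each $E_\ell$ contributes to $E^{(m)}$ for at most $L$ indices $m$, we have $\sum_m E^{(m)} = \sum_\ell |\ell|\, E_\ell \leq L\, \id$. Pairing both sides with any state $\sigma$ on $\mathcal{H}^{\otimes n}$ and invoking the inequality from step one will give $\sum_m \Tr[\Pi_{\rho_{\pazocal E(m)}}\, \sigma] \leq L$. To finish, I take $\sigma = (\sigma^*)^{\otimes n}$ with $\sigma^*$ achieving the minimum in~\eqref{eq:Rinftyminmax}; the factorisation $\Pi_{\rho_{\bm x}} = \bigotimes_i \Pi_{\rho_{x_i}}$ yields $\Tr\big[\Pi_{\rho_{\bm x}} (\sigma^*)^{\otimes n}\big] \geq 2^{-n R_\infty(\WW)}$ uniformly in $\bm x$, so summing over the $M$ codewords gives $M\cdot 2^{-n R_\infty(\WW)} \leq L$, i.e.\ $\log(M/L) \leq n R_\infty(\WW)$. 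Dividing by $n$, taking the supremum over codes and the limit superior in $n$ completes the argument.

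The only genuinely new ingredient compared to the ordinary ($L=1$) sphere-packing argument is the accounting in the second step, where the factor $|\ell| \leq L$ emerges from swapping the double sum over $m$ and $\ell$ and produces the $L$ on the right-hand side. The operator inequality in step one is the standard one from the $L=1$ case; for pure-state channels it trivialises to $E^{(m)} \ket{\psi_{\pazocal E(m)}} = \ket{\psi_{\pazocal E(m)}}$, so no new analytic idea beyond the $L=1$ proof is needed, and I do not expect any step to pose a real obstacle.
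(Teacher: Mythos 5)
Your proposal is correct and follows essentially the same route as the paper: define the accumulated operators $\sum_{\ell\ni m}E_\ell$, upgrade the zero-error trace condition to the operator inequality against $\Pi_{\rho_{\pazocal E(m)}}$, use the list-size constraint to bound their sum by $L\,\id$, and test against the tensor power of the minimizer in~\eqref{eq:Rinftyminmax}. The only cosmetic differences are that you sum over all $M$ codewords instead of extracting a single index by averaging, and you keep the inequality $\sum_m E^{(m)}\leq L\,\id$ rather than normalising the lists to have exactly $L$ entries; both variants are equivalent.
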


\begin{proof}
Consider a code with $M$ codewords $\bm{x}_1, \bm{x}_2,\ldots, \bm{x}_M$ and an associated POVM \modifica{with operators $\{E_\ell\}_{\ell \in \pazocal{P}_L([M])}$}. Defining, for all $i=1,\ldots, M$, the operator
\vspace{-5pt}
\bb
F_{i}=\sum_{\ell:\, \ell \ni i } E_\ell\, ,
\ee
the zero-error assumption \modifica{coincides with
$\Tr[\rho_{\bm{x}_i}F_i]=1$ for all $i=1,\dots, M\,$},
which in turn implies
\bb
F_i\geq \Pi_{\rho_{\bm{x}_i}}\qquad \forall\, i=1,\dots, M\,.
\label{eq:F_vsPi_i}
\ee
Note that, since $|\ell|\leq L$ \modifica{for all} $\ell \in \pazocal{P}_L([M])$, we have 
\modifica{
\bb
\sum_{i=1}^M F_{i} = \sum_{i=1}^M \sum_{\ell:\, \ell \ni i } E_\ell = \sum_{\ell} \sum_{i\in\ell} E_\ell \leq \sum_{\ell} L E_\ell = L\id\,.
\ee
}
Let $\sigma$ be the minimizer in~\eqref{eq:Rinftyminmax}. From the previous equation we deduce that for at least one $i$ we must have
\bb\label{eq:Tr(E_iF)UB}
\Tr\big[F_i\sigma^{\otimes n}\big] \leq L/M\,.
\ee
On the other hand, equation~\eqref{eq:Rinftyminmax} implies that 
\bb
\Tr\big[\Pi_{\rho_{\bm{x}_i}}\sigma^{\otimes n}\big] \geq 2^{-n R_\infty(\WW)}\,.
\ee
Due to~\eqref{eq:F_vsPi_i}, we deduce that $\Tr\big[F_i \sigma^{\otimes n}\big]\geq 2^{-n R_\infty(\WW)}$. Combining this with equation~\eqref{eq:Tr(E_iF)UB}, we find that $M\leq L\cdot 2^{n R_\infty(\WW)}$. This implies the statement of the theorem.
\end{proof}

In the classical case, that is, when all the $\rho_x$ commute,
{the study of $C_{0,L}$ dates back to Elias \cite{elias-1988}. In that case, generally speaking, the situation is as follows: for any given channel $\pazocal{W}$,
\setlist[itemize]{leftmargin=4.5mm}
\begin{itemize}
    \item deducing whether $C_{0,L}(\pazocal{W})$ is positive or not is trivial;
    \item for $L=1$, the problem reduces to the so-called graph-capacity problem for which only bounds are knonw in general, except for some special cases (see \cite{lovasz-1979});
    \item for fixed $L>1$ much less is known (see e.g.\ \cite{dellaf-costa-dalai-2024});
    \item Theorem \ref{th:C_{0L}<=Rinfty} is tight, i.e.\ $\!\displaystyle{\lim_{L\to\infty}C_{0,L}(\WW) = R_\infty(\WW)}$ \cite{elias-1988}.
\end{itemize}
	
In the quantum setting, the case $L=1$ is equivalent to the classical case, while essentially nothing is known for $L\geq 2$, not even a criterion to determine whether $C_{0,L}$ is positive.
In this paper, we consider pure-state channels, proving some unexpected results. We obtain for these channels an achievability for $C_{0,2}$ and a converse on $C_{0,L}$ (Theorem~\ref{thm:main} and Theorem~\ref{thm:converse}) which together imply the following results:
\begin{itemize}
    \item $C_{0,L}>0$ for $L\geq 2$ for any non-trivial pure state channel;
    \item for a class of channels which we say have \emph{positive semi-definite absolute overlaps}, $C_{0,L}$ remains constant for \mbox{$L\geq 2$}, say $C_{0,\geq 2}$ (explicitly computable);
    \item for a subset of these channels whose states form \emph{pairwise non-obtuse angles}, it holds $C_{0,\geq 2}=R_{\infty}$;
    \item for some channels not in this restricted class, instead, $C_{0,\geq 2}<R_\infty$, so that Theorem \ref{th:C_{0L}<=Rinfty} is not tight as in the classical case.
\end{itemize}  
}

\section{Main results}
Let a pure state channel $\WW$ have states $\{\ket{\psi_x}\}_{x\in\mathcal{X}}$. {We define the \emph{absolute overlap matrix} $A_{\WW}$ of $\WW$ as the $|\mathcal{X}|\times|\mathcal{X}|$ matrix
\vspace{-5pt}
\bb
 (A_{\WW})_{xx'}\coloneqq \lvert\braket{\psi_x | \psi_{x'}}\rvert\,\qquad x,x'\in \mathcal{X}.
\ee
For any distribution $P$ on $\mathcal{X}$, we define the quadratic form with matrix $A_{\WW}$
\bb\label{eq:Q_P}
Q_P(\WW) & \coloneqq P^\intercal A_\pazocal{W}P = \sum_{x,x'} P(x)P(x') \lvert\braket{\psi_x | \psi_{x'}}\rvert.
\ee	
Note that $Q_P(\WW)$ is an average absolute overlap, as it can be written as $ \mathbb{E} \lvert\braket{\psi_X | \psi_{X'}}\rvert$, where the expectation value is taken over the two independent random variables $X,X'\sim P$. We say that a classical-quantum pure-state channel $\WW$ has \emph{positive semi-definite absolute overlaps} if
\bb\label{eq:PSD}
    A_{\WW}\geq 0.
\ee
Finally, we say that $\WW$ is \emph{pairwise non-obtuse} if there exist unit-modulus coefficients $\alpha_x$, $x\in\mathcal{X}$, such that $\braket{\alpha_x\psi_x | \alpha_{x'}\psi_{x'}}$ is real and non-negative for all $x,x'\in \mathcal{X}$. Note that any such channel has positive semi-definite absolute overlaps, since in this case the entries of $(A_{\WW})_{xx'}= |\braket{\psi_x | \psi_{x'}}|$ are the inner products of a family of vectors $\{\alpha_x\ket {\psi_{x'}}\}_{x\in \mathcal{X}}$. 

\subsection{Achievability bound}

\begin{thm}\label{thm:main}
	For a classical-quantum pure-state channel $\WW$, the list-decoding zero-error capacity with list size $L=2$ is lower bounded as
	\vspace{-5pt}
	\bb\label{th:expurgmain}
	C_{0,2}(\WW)\geq \max_P \log\frac{1}{Q_P(\WW)}\,.
	\ee
\end{thm}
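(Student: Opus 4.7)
The plan is to establish the bound by a random-coding argument with expurgation, followed by an explicit construction of a zero-error list-$2$ POVM for the surviving code. Fix a distribution $P$ on $\mathcal{X}$ within $\epsilon$ of the supremum on the right-hand side, and fix a rate $R < \log(1/Q_P(\WW))$. The goal is to exhibit, for every $n$ large enough, a zero-error list-$2$ code on $\WW^{\otimes n}$ of size at least $2^{nR}$.

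I would draw $M = 2\lceil 2^{nR}\rceil$ codewords $\bm{X}_1,\dots,\bm{X}_M$ i.i.d.\ from $P^{\otimes n}$. The key calculation exploits the tensorization of pure-state overlaps: since $\lvert\braket{\psi_{\bm{x}}|\psi_{\bm{x}'}}\rvert=\prod_i\lvert\braket{\psi_{x_i}|\psi_{x'_i}}\rvert$, independence of coordinates yields $\mathbb{E}\lvert\braket{\psi_{\bm{X}_i}|\psi_{\bm{X}_j}}\rvert = Q_P(\WW)^n$ for any $i\neq j$. Setting $\omega_i \coloneqq \sum_{j\neq i}\lvert\braket{\psi_{\bm{X}_i}|\psi_{\bm{X}_j}}\rvert$, linearity of expectation gives $\mathbb{E}\sum_i \omega_i = M(M-1)\,Q_P(\WW)^n$. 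Applying Markov's inequality followed by a pigeonhole argument, with positive probability I can retain a subcode of at least $M/2$ codewords each satisfying $\omega_i \leq 4M\,Q_P(\WW)^n$, which is exponentially small in $n$ by the choice $R<\log(1/Q_P(\WW))$. The asymptotic rate of the surviving subcode is then $R$, matching the statement up to the slack $\epsilon$.

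The last and hardest step is to turn this small-overlap property into a bona fide zero-error list-$2$ POVM for the surviving subcode. The natural ansatz exploits the fact that, for each pair $\ell=\{i,j\}$, any PSD operator $E_\ell$ supported on the subspace $V_\ell \coloneqq \Span\{\ket{\psi_{\bm{X}_k}} : k\notin\ell\}^\perp$ automatically satisfies $E_\ell\ket{\psi_{\bm{X}_k}}=0$ for every $k\notin\ell$; since $M$ is much smaller than $\dim\mathcal{H}^{\otimes n}$ in our regime, each $V_\ell$ has codimension at most $M-2$ and is therefore very large. The remaining task is to choose PSD operators $E_\ell$ supported on $V_\ell$ so that $\sum_\ell E_\ell = \id$ and $F_i\ket{\psi_{\bm{X}_i}}=\ket{\psi_{\bm{X}_i}}$ for every $i$, and I expect this simultaneous feasibility to be the main obstacle. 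A natural candidate is to build each $E_{\{i,j\}}$ out of the orthogonal projections of $\ket{\psi_{\bm{X}_i}}$ and $\ket{\psi_{\bm{X}_j}}$ onto $V_{\{i,j\}}$, which by the bound on $\omega_i$ differ from $\ket{\psi_{\bm{X}_i}}$ and $\ket{\psi_{\bm{X}_j}}$ only by exponentially small corrections; the normalization and zero-error identities should then follow from a perturbative linear-algebraic argument exploiting the fact that the Gram matrix of the subcode is exponentially close to the identity in operator norm, together with the resulting invertibility of the relevant restrictions. Making this last step precise, and in particular converting ``close to orthogonal'' into an \emph{exact} zero-error decoder, is where the real work of the theorem lies.
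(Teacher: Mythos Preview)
Your random-coding and expurgation steps are essentially the paper's: draw i.i.d.\ codewords from $P^{\otimes n}$, use tensorization of the overlaps to get $\mathbb{E}\lvert\braket{\psi_{\bm{X}_i}|\psi_{\bm{X}_j}}\rvert=Q_P(\WW)^n$, and expurgate half the code so that every surviving codeword has small total overlap $\omega_i$. The paper chooses constants so that the surviving bound is literally $\omega_i\le 1$, but your bound $\omega_i\le 4M\,Q_P(\WW)^n$ is exponentially small in $n$ and hence also $\le 1$ for all large $n$, so the outcomes coincide.

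The gap is exactly where you flag it: the decoder construction. Your proposed route\,---\,building $E_{\{i,j\}}$ from projections of $\ket{\psi_{\bm{X}_i}},\ket{\psi_{\bm{X}_j}}$ onto $V_{\{i,j\}}$ and invoking a perturbative argument\,---\,does not close. Zero error demands the \emph{exact} identities $\sum_\ell E_\ell=\id$ and $F_i\ket{\psi_{\bm{X}_i}}=\ket{\psi_{\bm{X}_i}}$; ``Gram matrix close to identity'' only gives these approximately, and there is no evident mechanism by which your ansatz produces operators summing exactly to the identity while remaining positive and supported on the prescribed $V_\ell$.

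The paper replaces this perturbative step by a clean algebraic fact. After expurgation the Gram matrix $G$ of the surviving states has unit diagonal and off-diagonal row sums $\sum_{j\neq i}|g_{ij}|=\omega_i\le 1$, i.e.\ $G$ is Hermitian and diagonally dominant. The key lemma (a factor-width-$2$ statement) is that any such $G$ admits a factorization $G=V^\dagger V$ with at most two nonzero entries per row of $V$; the construction is explicit, one row $v_{ij}$ per pair $(i,j)$ carrying the off-diagonal entry $g_{ij}$, plus one row per $i$ absorbing the diagonal slack $1-\omega_i$. Reading the columns of $V$ as coordinate representations of the $\ket{\psi_{\bm{X}_i}}$ in an orthonormal basis $\{\ket{e_k}\}$ (this is always possible, via a Naimark-type embedding, since $V^\dagger V=G$), a projective measurement in that basis has each outcome compatible with at most two codewords, yielding an exact zero-error list-$2$ decoder. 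In other words, diagonal dominance\,---\,which your expurgation already delivers\,---\,is by itself sufficient for list-$2$ zero-error decodability; no perturbative analysis is needed.
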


\begin{proof}
	The proof is based on the standard expurgation method. We build a code with $2M$ codewords of length $n$ and then we discard $M$ which do not {fulfill} a required property. Recall that $Q_P(\WW) = \mathbb{E}|\braket{\psi_X | \psi_{X'}}|$ for independent random $X, X'$ with distribution $P$. If $Q_P=1$ for all $P$, the channel is trivial and $C_{0,L}=0$, so that the statement of the theorem is already proved. So, consider a distribution $P$ such that $Q_P<1$.
	Let $\bm{x}_1, \bm{x}_2, \dots, \bm{x}_{2M}$ be $2M$ random independent sequences of i.i.d.\ symbols with distribution $P$, and let  $\ket{\psi_{\bm{x}_1}}, \ket{\psi_{\bm{x}_2}}, \dots, \ket{\psi_{\bm{x}_{2M}}}$ be the associated $2M$ states. Then, for $i \ne j$,  we have 
	\bb
	\mathbb{E}\left[ \lvert\braket{\psi_{\bm{x}_i} | \psi_{\bm{x}_j}}\rvert \right] = Q_P^n(\WW)\,.
	\ee
	Define for each $i=1,\ldots, 2M$ the quantity
	\bb
	S_i = \sum_{j:\, j \ne i} \lvert\braket{\psi_{\bm{x}_i} | \psi_{\bm{x}_j}}\rvert\,.
	\ee
	If we choose $M = \lfloor \frac{1}{4} Q_P^{-n}(\WW) \rfloor$ we have
	\bb
	\mathbb{E}[S_i] = (2M-1) Q_P^n(\WW) < \frac{1}{2}\,.
	\ee
	By Markov's inequality,
	\vspace{-5pt}
	\bb
	\mathbb{P}\left[S_i \geq 1\right] < \frac{1}{2}\,.
	\ee
	So, the expected number of indices $i$ for which $S_i \geq 1$ is at most $\frac{1}{2} \cdot 2M = M$. Thus, for at least one extraction of the random $2M$ codewords, $S_i < 1$ for at least $M$ of them. If we only keep $M$ such codewords and throw away the other ones, the {quantity} $S_i$, redefined on the subcode, will not increase for the codewords that we keep. So, we end up with $M$ codewords such that
    \bb\label{eq:nonsingular}
         \sum_{j:\, j \ne i} \lvert\braket{\psi_{\bm{x}_i} | \psi_{\bm{x}_j}}\rvert\,<1=\braket{\psi_{\bm{x}_i} | \psi_{\bm{x}_i}} \qquad \forall i \in [M].
    \ee
    Assume without loss of generality that those are the first $M$ codewords $\ket{\psi_{\bm{x}_1}}, \ket{\psi_{\bm{x}_2}} \dots \ket{\psi_{\bm{x}_M}}$; let us call $\Psi$ the matrix of the $M$ signals $\ket{\psi_{\bm{x}_1}}, \ket{\psi_{\bm{x}_2}} \dots \ket{\psi_{\bm{x}_M}}$:
        \bb\label{eq:matrix}
        \Psi \coloneqq\begin{bmatrix}
             |& & |\\
             \ket{\psi_{\bm{x}_1}} & \cdots & \ket{\psi_{\bm{x}_M}}\\
             |&&|
        \end{bmatrix}.
    \ee
    Eq. \eqref{eq:nonsingular} ensures that the Gram matrix $G\coloneqq\Psi^\dagger\Psi$ is diagonally dominant and hence non singular by Gershgorin's theorem. Therefore, the signals $\ket{\psi_{\bm{x}_i}}_{i\in [M]}$ are linearly independent and we can introduce $\{\ket{\hat{\psi}_{\bm{x}_i}}\}_{i=1,\ldots,M}$ to be the dual basis of $\{\ket{\psi_{\bm{x}_i}}\}_{i=1,\ldots,M}$, namely
\vspace{-3pt}
\bb\label{eq:dual}
\braket{\hat{\psi}_{\bm{x}_i}|\psi_{\bm{x}_j}}=\delta_{ij}\,.
\ee
The vectors of the dual basis can be explicitly constructed by taking the columns of the following matrix:
    \bb\label{eq:matrix2}
        \hat \Psi =\begin{bmatrix}
             |& & |\\
             \ket{\hat \psi_{\bm{x}_1}} & \cdots & \ket{\hat \psi_{\bm{x}_M}}\\
             |&&|
        \end{bmatrix}\coloneqq \Psi G^{-1}.
    \ee
This clearly satisfies \eqref{eq:dual}, as $\hat{\Psi}^\dagger\Psi = G^{-\dagger}\Psi^\dagger\Psi=G^{-1}G=\id$.
Let 
\bb\label{eq:PPtilde}
\tilde P\coloneqq &\;\{(i,j)\in[M]\times [M]: i< j\}\\ 
\subseteq &\;\{(i,j)\in[M]\times [M]: i\leq  j\}\eqcolon P,
\ee
and let us define the vectors
{\bb\label{eq:POVM}
\hspace*{-0.4em}\ket{\tilde{e}_r}\! \coloneqq\!\begin{cases} 
\sqrt{g_{ij}}\ket{\hat{\psi}_{\bm{x}_i}} + \sqrt{g_{ji}}\ket{\hat{\psi}_{\bm{x}_j}} &\!r=(i,j)\in \tilde P,\\
\sqrt{1-\sum_{k\neq i} |g_{ik}|} \ket{\hat{\psi}_{\bm{x}_i}}  &\! r=(i,i)\in P\setminus\tilde P.
\end{cases}
\ee}
in terms of the entries $g_{ij}$ of the Gram matrix $G$.
Then the POVM formed by the rank-one operators 
\bb
E_r\coloneqq\ketbra{\tilde{e}_{r}} \qquad r\in P,
\ee 
which outputs the list $\ell=\{r_1,r_2\}$ when the outcome is \mbox{$r=(r_1,r_2)$}, clearly gives list decoding with zero probability of error. Indeed, if we measure the states in the basis $\{\ket{e_r}\}_r$, we see that each output has positive probability for at most two input codewords. Let us verify that this is actually a POVM: the operator
{
\bb
\sum_{r\in P} \ketbra{\tilde{e}_{r}}&=\sum_{i,j:i\neq j} \big(|g_{ij}|\ketbra{\hat{\psi}_{\bm{x}_i}}+ g_{ij}\ketbraa{\hat{\psi}_{\bm{x}_i}}{\hat{\psi}_{\bm{x}_j}}\big) \\[-1mm]
&\qquad\qquad +\sum_{i}\Big(1-\sum_{k\neq i} |g_{ik}|\Big)\ketbra{\hat{\psi}_{\bm{x}_i}}\\
&= \hat{\Psi}G\hat{\Psi}^\dagger= \Psi G^{-1} \Psi^\dagger
\ee}
is the projection on the space spanned by the $\psi_{\bm{x}_i}$. 
	In the above, we have chosen $M=\lfloor \frac{1}{4} Q_P^{-n}(\WW) \rfloor$, so that our code has rate
	\vspace{-5pt}
	\begin{align}
		R & \geq - \log Q_P(\WW) -o(1)\,.
	\end{align}
	Maximizing over $P$, we obtain the statement.
\end{proof}

The algebraic reason behind the existence of the set of vectors defined in \eqref{eq:POVM} is the following. For any arbitrary  $M\times M$ diagonally dominant Hermitian matrix $G=(g_{ij})$, i.e.\
	\vspace{-5pt}
	\bb\label{eq:condition_g}
	  \qquad g_{ii}\geq \sum_{j\ne i} |g_{ij}|\,\qquad  \forall i=1,\dots, M,
	\ee
	   there exists a $\tfrac{M(M+1)}{2} \times M$ matrix $V$ with at most two non-zero elements in each row such that	
       \bb \label{eq:decomposition}
       G = V^\dagger\, V.
       \ee
Therefore, up to some isometric mapping to a larger reference system, there exists an orthonormal basis $\{\ket{e_r}\}_{r\in S}$} with respect to which the representations of the $M$ signals $\ket{\psi_{\bm{x}_1}}, \ket{\psi_{\bm{x}_2}} \dots \ket{\psi_{\bm{x}_M}}$ are the $M$ columns of $V$.
    Because each row of $V$ has at most two non-zero entry, if we measure the states in the basis $\{\ket{e_k}\}_k$, we see that each output has positive probability for at most two input codewords. So, the $M$ codewords are list-of-two decodable with zero error. 
The decomposition \eqref{eq:decomposition} is strictly connected to the notion of \emph{factor width}, and the proof relies on an easy extension of the results in \cite{BOMAN2005239}. For convenience, in the Appendix we provide a simple construction of $V$ that is consistent with the form of \eqref{eq:POVM}.
\begin{rem}
    It is interesting to compare the POVM that achieves the result of Theorem~\ref{thm:main} with the POVM used in~\cite{holevo-1998} and~\cite{holevo-2000}. Those achievability bounds essentially use projectors build from the columns of $\hat{\Psi}=\Psi G^{-1/2}$ while here we use the columns of $\hat{\Psi}=\Psi G^{-1}$ in pairs and scaled by coefficients associated to inner products. Note that $\hat{\Psi}=\Psi G^{-1}$ was also considered for the unambiguous measurement strategy studied in~\cite{takeok-krovi-guha-2013}.
\end{rem}

{
For a class of channels already considered in~\cite{Dalai_2013}, Theorem~\ref{thm:main} is tight and extends to every $L\geq 2$. Indeed, it was proved in~\cite[Th. 9]{Dalai_2013} that if a pure-state channel has the property that $\braket{\psi_x | \psi_{x'}} \ge 0$ for all $x,x'\in\mathcal{X}$, then we can rewrite $R_\infty(\WW)$ as
\bb\label{eq:Dalai_2013}
R_\infty(\WW) & = \max_P \Big[ -\log \sum_{x,x'} P(x)P(x') \braket{\psi_x | \psi_{x'}} \Big]\\
& = \max_P \log Q_P(\WW)^{-1}
\ee
Hence, in this case, the converse bound of Theorem~\ref{th:C_{0L}<=Rinfty} coincides with the achievability bound of Theorem~\ref{thm:main}. Because global phases on the pure states will not affect the probability of error in any measurement, the condition $\braket{\psi_x | \psi_{x'}} \ge 0$ can only be meaningful up to global phases of the vectors (see \cite[Rem.~3]{Dalai_2013}). As a consequence, for any arbitrary pure-state pairwise non-obtuse channel, we can replace $ \braket{\psi_x | \psi_{x'}}$ with $|\braket{\psi_x | \psi_{x'}}|$ in \eqref{eq:Dalai_2013}, obtaining the following result.

\begin{cor} 
\label{cor_posinnprod}
	For a pure-state pairwise non-obtuse channel $\WW$,
    for any $L\geq 2$ we have
    \vspace{-5pt}
    \bb
        C_{0,L}(\WW)= \max_P \log \frac{1}{Q_P(\WW)}=R_\infty(\WW).
    \ee
\end{cor}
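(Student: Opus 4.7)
The plan is to reduce the corollary to three ingredients already established in the excerpt: (i) the converse $C_{0,L}(\WW)\leq R_\infty(\WW)$ of Theorem~\ref{th:C_{0L}<=Rinfty}; (ii) the list-$2$ achievability $C_{0,2}(\WW)\geq \max_P\log(1/Q_P(\WW))$ of Theorem~\ref{thm:main}; and (iii) the identity $R_\infty(\WW)=\max_P\log(1/Q_P(\WW))$ that, by \cite[Th.\,9]{Dalai_2013}, holds whenever the pure states have non-negative pairwise inner products. The only real work is to move from the pairwise non-obtuse hypothesis to the non-negative-inner-products hypothesis used in (iii).

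The first step is a gauge argument. By hypothesis there exist unit-modulus scalars $\alpha_x$ such that $\braket{\alpha_x\psi_x|\alpha_{x'}\psi_{x'}}\geq 0$ for all $x,x'\in\mathcal{X}$. I would define $\ket{\widetilde\psi_x}\coloneqq \alpha_x\ket{\psi_x}$ and let $\widetilde\WW$ be the resulting pure-state channel. Since $\Ketbra{\widetilde\psi_x}=\Ketbra{\psi_x}$, the density operators coincide, so $\WW$ and $\widetilde\WW$ are literally the same CQ channel; in particular they share all list-$L$ zero-error capacities, the same $R_\infty$, and the same absolute overlap matrix $A_\WW=A_{\widetilde\WW}$, hence the same $Q_P$ for every distribution $P$. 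Applying (iii) to $\widetilde\WW$ then yields $R_\infty(\WW)=\max_P\log(1/Q_P(\WW))$.

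The second step assembles the bounds. Combining (ii) with the identity just obtained gives $C_{0,2}(\WW)\geq R_\infty(\WW)$, while (i) gives $C_{0,L}(\WW)\leq R_\infty(\WW)$. It remains to observe that $C_{0,L}(\WW)$ is non-decreasing in $L$: any zero-error $2$-list code with $M$ codewords can be padded into a zero-error $L$-list code with the same $M$, losing only the rate $\frac{1}{n}\log(L/2)$ caused by the $\log(M/L)$ normalization in~\eqref{L_list_ZE_one_shot_capacity}, which vanishes in the limit $n\to\infty$ for fixed $L$. This closes the sandwich $R_\infty(\WW)\leq C_{0,2}(\WW)\leq C_{0,L}(\WW)\leq R_\infty(\WW)$ for every $L\geq 2$.

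There is essentially no obstacle beyond unpacking definitions; the single conceptual point to verify is that ``pairwise non-obtuse'' is exactly the coordinate-free version of ``non-negative pairwise inner products,'' because a CQ channel depends only on the density operators $\Ketbra{\psi_x}$ and not on any particular choice of unit-vector representatives. Once this gauge invariance is noted, the corollary follows immediately from Theorems~\ref{th:C_{0L}<=Rinfty} and~\ref{thm:main} together with the rewriting of $R_\infty$ borrowed from~\cite[Th.\,9]{Dalai_2013}.
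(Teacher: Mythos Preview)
Your proposal is correct and follows essentially the same route as the paper: combine the converse of Theorem~\ref{th:C_{0L}<=Rinfty}, the achievability of Theorem~\ref{thm:main}, and the identity $R_\infty(\WW)=\max_P\log(1/Q_P(\WW))$ from \cite[Th.~9]{Dalai_2013}, using the gauge freedom in the global phases of the $\ket{\psi_x}$ to pass from the pairwise non-obtuse hypothesis to non-negative inner products. You are slightly more explicit than the paper in spelling out both the gauge-invariance argument and the monotonicity $C_{0,2}(\WW)\leq C_{0,L}(\WW)$, but the substance is identical.
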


\begin{rem} 
We point out that any binary pure-state channel is pairwise non-obtuse,
as we can always find a unit-modulus coefficient $\alpha_1$ such that $\braket{\psi_0|\alpha_1\psi_1}\geq 0$. Hence, for $L\geq2$, $C_{0,L}(\WW)=R_\infty(\WW)$ for all binary pure-state channels.
\end{rem}
}

\subsection{Converse bound}

Given a pure state channel $\WW$ with states $\{\ket{\psi_x}\}$, consider the set of matrices
\bb\label{eq:def_A}
\hspace*{-1mm}\mathcal{A}_{\WW}\coloneqq\left\{A\in \mathbb{R}^{|\mathcal{X}|\times|\mathcal{X}|} : A\geq 0\, |A_{x,x'}|\leq \lvert \braket{\psi_x|\psi_{x'}} \rvert \right\} .
\ee
Define, for any $A \in \mathcal{A}_{\WW}$, the quantity
\bb
Q_P(A) \coloneqq \sum_{x,x'} P(x)P(x') A_{x,x'}\,.	
\ee

\begin{thm}\label{thm:converse}
For any classical-quantum pure-state channel $\WW$, the list-decoding zero-error capacity with list size $L$ is upper bounded as
\vspace{-5pt}
\bb
C_{0,L}(\WW)\leq \min_{A\in\mathcal{A}_{\WW}}\max_P \log\frac{1}{Q_P(A)}\,.
\ee
\end{thm}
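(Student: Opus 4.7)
The plan is to fix $A \in \mathcal{A}_\WW$ and prove the single-letter bound $C_{0,L}(\WW) \le \max_P \log\frac{1}{Q_P(A)}$; minimising over $A$ then yields the theorem. I would reduce to a one-shot converse for an arbitrary pure-state channel and apply it to $\WW^{\otimes n}$ with the tensor-power matrix $A^{\otimes n}$, using that $A^{\otimes n} \in \mathcal{A}_{\WW^{\otimes n}}$: PSD-ness is preserved by tensor products, and $|A^{\otimes n}_{\bm x,\bm x'}| = \prod_j|A_{x_j,x_j'}| \le \prod_j|\langle\psi_{x_j}|\psi_{x_j'}\rangle| = |\langle\psi_{\bm x}|\psi_{\bm x'}\rangle|$.

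A clean ingredient is the tensor-power multiplicativity $\min_{P_n} Q_{P_n}(A^{\otimes n}) = \bigl(\min_P Q_P(A)\bigr)^n$. Writing $A_{x,x'} = \langle\phi_x|\phi_{x'}\rangle$ for Gram vectors $|\phi_x\rangle$ in an auxiliary real Hilbert space $\mathcal{K}$, we have $Q_P(A) = \bigl\|\sum_x P(x)|\phi_x\rangle\bigr\|^2$, and $q := \min_P Q_P(A)$ is the squared distance from the origin to $\co\{|\phi_x\rangle : x \in \mathcal{X}\}$. The supporting-hyperplane theorem at the nearest point of this convex hull provides a unit vector $|u^*\rangle \in \mathcal{K}$ with $\langle u^*|\phi_x\rangle \ge \sqrt{q}$ for all $x$; tensoring, $|u^*\rangle^{\otimes n}$ is a unit vector satisfying $\langle (u^*)^{\otimes n}|\phi_{\bm x}\rangle = \prod_j \langle u^*|\phi_{x_j}\rangle \ge q^{n/2}$, and Cauchy--Schwarz gives $Q_{P_n}(A^{\otimes n}) \ge q^n$ for every $P_n$, while $P_n = P^{*\otimes n}$ yields the matching upper bound.

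The technical heart is the one-shot converse: for an arbitrary pure-state channel $\WW'$ and any $A' \in \mathcal{A}_{\WW'}$, a zero-error $L$-list $M$-code must satisfy $M \cdot \min_P Q_P(A') \le L$. Starting from the operators $F_i = \sum_{\ell\ni i} E_\ell$ with $F_i \ge |\psi_{x_i}\rangle\langle\psi_{x_i}|$ and $\sum_i F_i = L\,\id$ (as in the proof of Theorem~\ref{th:C_{0L}<=Rinfty}), the strategy is to exhibit a density operator $\sigma$ on the output space $\mathcal{H}'$ with $\langle\psi_{x_i}|\sigma|\psi_{x_i}\rangle \ge \min_P Q_P(A')$ for every codeword, then close the loop via $M\cdot\min_P Q_P(A') \le \sum_i \langle\psi_{x_i}|\sigma|\psi_{x_i}\rangle \le \sum_i \Tr[F_i\sigma] = L$. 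Applied to $\WW^{\otimes n}$ with $A^{\otimes n}$ and combined with the multiplicativity identity, this produces $M_n \le L q^{-n}$, so $\tfrac{1}{n}\log(M_n/L) \le \log(1/q)$; sending $n\to\infty$ and then minimising over $A$ finishes the argument.

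The main obstacle is the construction of $\sigma$: the naive choice $\sigma = \sum_x P^*(x)|\psi_x\rangle\langle\psi_x|$ only gives $\langle\psi_y|\sigma|\psi_y\rangle = \sum_x P^*(x)|\langle\psi_x|\psi_y\rangle|^2$, which involves squared overlaps and loses a factor of two in the exponent relative to the first-order entries $A'_{x,x'}$ controlled by $|A'_{x,x'}|\le|\langle\psi_x|\psi_{x'}\rangle|$. The correct witness must exploit the PSD structure of $A'$ via the Gram vectors $|\phi_x\rangle$, most plausibly through a coherent embedding such as $|\psi_x\rangle \mapsto |\psi_x\rangle \otimes |\phi_x\rangle/\sqrt{A'_{xx}}$ of $\mathcal{H}'$ into $\mathcal{H}'\otimes\mathcal{K}$ followed by a partial trace, so that the first-order inner products $\langle\phi_x|\phi_{x'}\rangle = A'_{x,x'}$ enter the witness inequality directly, with the modulus constraint $|A'_{x,x'}|\le|\langle\psi_x|\psi_{x'}\rangle|$ mediating between the two inner-product structures.
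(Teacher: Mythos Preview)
Your overall scaffolding---fix $A\in\mathcal{A}_\WW$, apply a one-shot converse to $\WW^{\otimes n}$ with $A^{\otimes n}$, and single-letterise via the multiplicativity $\min_{\bm P}Q_{\bm P}(A^{\otimes n})=(\min_P Q_P(A))^n$---is exactly the structure of the paper's argument, and your supporting-hyperplane proof of multiplicativity is a clean substitute for the paper's citation to Jelinek.

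The gap is in the one-shot converse. Your proposed route is to find a density operator $\sigma$ on $\mathcal{H}'$ with $\langle\psi_{x_i}|\sigma|\psi_{x_i}\rangle\ge q\coloneqq\min_P Q_P(A')$ for every codeword, then chain $Mq\le\sum_i\langle\psi_{x_i}|\sigma|\psi_{x_i}\rangle\le\sum_i\Tr[F_i\sigma]=L$. But such a $\sigma$ need not exist, even when the target inequality $Mq\le L$ is tight. Take the trine channel at block length $n=1$ with all three states as codewords: this is a valid zero-error $2$-list $3$-code (the anti-trine POVM works), and $q=\min_P Q_P(A_{\pazocal T})=2/3$, so $Mq=2=L$. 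Yet $\max_\sigma\min_i\langle\psi_i|\sigma|\psi_i\rangle=2^{-R_\infty(\pazocal T)}=1/2<2/3$, so no state $\sigma$ on $\mathbb{C}^2$ satisfies your hypothesis. The embedding $|\psi_x\rangle\mapsto|\psi_x\rangle\otimes|\phi_x\rangle$ does not help: any witness you build on $\mathcal{H}'\otimes\mathcal{K}$ must be traced back to $\mathcal{H}'$ to pair with the $F_i$, and the resulting $\sigma$ is still subject to the obstruction above. More conceptually, an inequality of the form $\langle\psi_x|\sigma|\psi_x\rangle\ge q$ holding for all symbols $x$ would force $R_\infty(\WW)\le\log(1/q)$, which the trine section of the paper shows is false in general; so the $\sigma$-approach can at best recover Theorem~\ref{th:C_{0L}<=Rinfty}, not Theorem~\ref{thm:converse}.

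The paper's one-shot converse avoids any auxiliary state. It dilates the list-decoding POVM to a projective measurement in an orthonormal basis $\{|e_k\rangle\}$ of a larger space, writes the codeword states as the columns of the coefficient matrix $V$ (so the Gram matrix is $G=V^\dagger V$), and observes that zero-error $L$-list decodability forces each \emph{row} of $V$ to have at most $L$ nonzero entries. With $T\coloneqq\sum_{i,j}|g_{ij}|$, Cauchy--Schwarz applied row by row gives $T\le\sum_k\big(\sum_i|v_{ki}|\big)^2\le L\sum_k\sum_i|v_{ki}|^2=LM$. On the other side, $|g_{ij}|=|\langle\psi_{\bm x_i}|\psi_{\bm x_j}\rangle|\ge(A^{\otimes n})_{\bm x_i,\bm x_j}$ and hence $T\ge M^2\min_{\bm P}Q_{\bm P}(A^{\otimes n})=M^2 q^n$. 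Combining yields $M\le Lq^{-n}$. The crucial move is thus purely combinatorial/linear-algebraic on the Gram matrix, exploiting row sparsity of the dilated measurement rather than the existence of a good test state.
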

\begin{proof}
		Let $\ket{\psi_{\bm{x}_1}}, \ket{\psi_{\bm{x}_2}}, \dots, \ket{\psi_{\bm{x}_{M}}}$ be a code which is list-of-$L$ decodable with zero error. Then, because any POVM can be seen as a projective measurement in a larger space, we can assume that $\ket{\psi_{\bm{x}_1}}, \ket{\psi_{\bm{x}_2}}, \dots, \ket{\psi_{\bm{x}_{M}}}$ live in a large space with an orthonormal basis $\{\ket{e_k}\}_k$   such that all the operators of the measurement are projectors diagonal in that basis. We can then write our vectors in that basis and write their coordinates as columns of a matrix $V$.
    So, $G=V^\dagger V$ is the Gram matrix of the $\ket{\psi_{\bm{x}_i}}$, and each row of $V$ only contains at most $L$ non-zero entries, due to the assumption of list-of-$L$ decodability. Let {$\bra{v_k}\coloneqq \sum_{i=1}^M v_{ki} \bra{i}$}, with \mbox{$v_{ki}\coloneqq \braket{e_k|\psi_{\bm{x}_i}}$}, be the $k$-th row of $V$, so that we can write
	\bb\label{eq:31}
	V=\sum_k \ketbraa{e_k}{v_k}\,,\qquad G=\sum_k \ketbra{v_k}\,.
	\ee
	Calling $g_{ij}$ the matrix elements of $G$, we define
	\bb
	T\coloneqq \sum_{i,j}|g_{ij}|\,.
	\ee
	{Consider the contribution of each $\ket{v_k}$ to the diagonal of $T$:} 
	\bb
	D_k & \coloneqq \sum_i  {|v_{ki}|^2}\,.
	\ee
	Then we must have
	$\sum_kD_k =\sum_{i}g_{ii}=M$,
    which can be also seen if we directly compute
	\bb\label{eq:34}
	\sum_kD_k =\sum_{i}\sum_k \braket{\psi_{\bm{x}_i}|e_k}\braket{e_k|\psi_{\bm{x}_i}}=M\,.
	\ee
	Let us define the total ``absolute contribution'' given by $\ket{v_k}$ to $T$ as
	\vspace{-5pt}
	\bb
	T_k & \coloneqq \sum_{i, j} |v_{ki}v_{kj}|\,.
	\ee
	Note that, because $v_{ki}\neq 0$ for at most $L$ values of $i$, by the Cauchy--Schwarz inequality we have
	\bb\label{eq:37}
	T_k &  = \Big(\sum_i |v_{ki}|\Big)^2 = \Big(\sum_{i:\, v_{ki}\neq 0} |v_{ki}|\cdot 1\Big)^2\\
	& \leq L\sum_i |v_{ki}|^2 = L D_k\,,
	\ee
	Then, by~\eqref{eq:31}, we find
	\bb\label{eq:38}
	T & =\sum_{i, j} |g_{ij}|=\sum_{i, j} \left|\sum_k v_{ki}^*v_{kj}\right|\\
	& \leq \sum_{i, j} \sum_k |v_{ki}||v_{kj}|=  \sum_k \sum_{i, j}|v_{ki}||v_{kj}|\\
	& = \sum_k T_k\leqt{(i)} \sum_k LD_k \eqt{(ii)} LM\,,
	\ee
    where in (i) we have leveraged~\eqref{eq:37}, and in (ii) we have recalled~\eqref{eq:34}.
	Now, for any $i, j$ and any $A\in\mathcal{A}$ (see~\eqref{eq:def_A}), calling $\bm{x}_i=(x_{i1},\dots, x_{in})$ and setting $\bm{A}\coloneqq A^{\otimes n}$, we have
	\begin{align*}
	|g_{ij}| & = |\!\braket{\psi_{\bm{x}_i}|\psi_{\bm{x}_j}}\!|\geq \prod_{\ell =1}^n |A_{x_{i\ell},x_{j\ell}}|\geq \prod_{\ell =1}^n A_{x_{i\ell},x_{j\ell}}=\bm{A}_{\bm{x}_i,\bm{x}_j} .
	\end{align*}
	Then, we can lower bound
	\bb\label{eq:40}
	T  =\sum_{i, j} |g_{ij}|
	& \geq M^2 \frac{1}{M^2} \sum_{i, j} \bm{A}_{\bm{x}_i,\bm{x}_j}\\
	& \geq M^2 \min_{\bm{P}} \sum_{\bm{x}, \bm{x'}}\bm{P}(\bm{x})\bm{P}(\bm{x}') \bm{A}_{\bm{x},\bm{x}'}\,,
	\ee
    where $\bm{P}$ is an arbitrary probability distribution over $\mathcal{X}^{n}$. 
	Since $A\geq 0$, also $\bm{A}\geq 0$ and, as proved in~\cite{jeline-1968}, we have 
	\begin{align}\nonumber
	\min_{\bm{P}} \sum_{\bm{x}, \bm{x'}}\bm{P}(\bm{x})\bm{P}(\bm{x}') \bm{A}_{\bm{x},\bm{x}'} & = \Big(\min_{P} \sum_{x, x'}P(x)P(x') A_{x,x'}\Big)^n\\
	& =\left[\min_P Q_P(A)\right]^n\,.\label{eq:41}
	\end{align}
	Thus, combining~\eqref{eq:40} with~\eqref{eq:41}, we get
	\bb
	T \geq M^2 \left[\min_P Q_P(A)\right]^n\,.
	\ee
	Recalling that, by~\eqref{eq:38}, we have $T\leq LM$, we infer that
	\bb
	M\leq L \left[\min_P Q_P(A)\right]^{-n}
	\label{M_L_converse_bound}
    \ee
	Taking logarithms and optimizing over $A\in\mathcal{A}_{\WW}$ we deduce the statement of the theorem.
\end{proof}

{
\begin{rem}
	\label{remark:M/L}
	We observe that \eqref{M_L_converse_bound} implies a stronger statement than the theorem, since it holds for any finite $n$, and hence even if $L$ grows with $n$. In particular, with the rate of the code taken to be $\frac{1}{n}\log(M/L)$ as in  \cite{shanno-gallag-berlek-1967} (and coherently with \eqref{L_list_ZE_capacity}), 
	the bound of the theorem applies to the rate of zero-error codes even with exponential list-size. The same remark holds for Theorem \ref{th:C_{0L}<=Rinfty}.
\end{rem}
}

\subsection{Matching bounds}

While Corollary~\ref{cor_posinnprod} suffices to deduce a closed-form expression for $C_{0,L}$ for pairwise non-obtuse channels, Theorem~\ref{thm:converse} allows us to deduce a closed-form expression also for pure-state channels with positive semi-definite absolute overlaps. Indeed, for these channels the absolute overlap matrix $A_{\WW}$ belongs to the family $\mathcal{A}_{\WW}$. This means that, combining the achievability bound of Theorem~\ref{thm:main} with the converse bound of Theorem~\ref{thm:converse}, we immediately get
\vspace{-5pt}
\begin{align*}
\max_P \log\frac{1}{Q_P(\WW)}&\leq C_{0,2}(\WW)\\
&\leq C_{0,L}(\WW)\leq \min_{A\in\mathcal{A}_{\WW}}\max_P \log\frac{1}{Q_P(A)}\\
&\phantom{\leq C_{0,L}(\WW) \;}\leq \max_P \log\frac{1}{Q_P(A_{\WW})}\,.
\end{align*}
Then, by the simple remark that $Q_P(\WW)=Q_P(A_{\WW})$, we get the following corollary.

\begin{cor}\label{cor:matching}
	Let a pure-state classical-quantum channel $\WW$ has positive semi-definite absolute overlaps, as in~\eqref{eq:PSD}. Then, for any $L\geq 2$,
	\vspace{-5pt}
	\bb
	C_{0,L}(\WW)= \max_P \log\frac{1}{Q_P(\WW)}\,.
    \label{eq:C0LPSDoverlaps}
	\ee
	where $Q_P$ is defined in~\eqref{eq:Q_P}. 
\end{cor}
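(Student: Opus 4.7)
The plan is simply to chain the achievability and converse bounds already established, using the PSD hypothesis to make the two sides meet. From Theorem~\ref{thm:main} I have
\bb
C_{0,2}(\WW) \geq \max_P \log\frac{1}{Q_P(\WW)},
\ee
and by the obvious monotonicity $C_{0,L}(\WW) \geq C_{0,2}(\WW)$ for all $L\geq 2$ (a decoder producing lists of size $2$ is \emph{a fortiori} a decoder producing lists of size $L$), this gives the required lower bound in~\eqref{eq:C0LPSDoverlaps} for every $L\geq 2$.

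For the matching upper bound I would invoke Theorem~\ref{thm:converse}, which yields
\bb
C_{0,L}(\WW) \leq \min_{A\in\mathcal{A}_{\WW}}\max_P \log\frac{1}{Q_P(A)}.
\ee
The single nontrivial remark\,---\,and the only place where the hypothesis of the corollary is used\,---\,is that, under the assumption $A_{\WW}\geq 0$, the absolute overlap matrix itself belongs to $\mathcal{A}_{\WW}$: positive semi-definiteness is the hypothesis, and the entrywise bound $|(A_{\WW})_{x,x'}|\leq |\braket{\psi_x|\psi_{x'}}|$ in the definition~\eqref{eq:def_A} holds with equality by the very definition of $A_{\WW}$. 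Consequently $A_{\WW}$ is admissible in the minimization, and
\bb
\min_{A\in\mathcal{A}_{\WW}}\max_P \log\frac{1}{Q_P(A)} \leq \max_P \log\frac{1}{Q_P(A_{\WW})}.
\ee

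To close the loop I would simply notice that $Q_P(A_{\WW}) = Q_P(\WW)$, which is immediate from the definition~\eqref{eq:Q_P}. Concatenating the three displayed inequalities yields the sandwich
\bb
\max_P \log\frac{1}{Q_P(\WW)} \leq C_{0,L}(\WW) \leq \max_P \log\frac{1}{Q_P(\WW)},
\ee
which is~\eqref{eq:C0LPSDoverlaps}. There is essentially no obstacle here: all the substantive work is contained in Theorems~\ref{thm:main} and~\ref{thm:converse}, and the role of the PSD hypothesis is only to certify that $A_{\WW}$ itself is a feasible point for the converse's minimization, so that this single feasibility check suffices to pin the capacity down exactly.
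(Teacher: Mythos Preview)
Your argument is correct and mirrors the paper's own derivation essentially step for step: both chain Theorem~\ref{thm:main} with the monotonicity $C_{0,2}\leq C_{0,L}$ for the lower bound, then use Theorem~\ref{thm:converse} with the observation that the PSD hypothesis makes $A_{\WW}$ a feasible point in $\mathcal{A}_{\WW}$, and finally identify $Q_P(A_{\WW})=Q_P(\WW)$ to close the sandwich.
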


\begin{rem} Corollary~\ref{cor:matching} holds whenever $|\mathcal{X}|\leq 3$; indeed, it can be proved using Sylvester's criterion that any $3\times 3$ positive semi-definite matrix remains positive semi-definite after taking the entrywise absolute value.
\end{rem}

While for pairwise non-obtuse channels Corollary~\ref{cor_posinnprod} also shows that $C_{0,L}=R_\infty$, Theorem~\ref{thm:converse} does not allow us to deduce the same for more general channels with only positive semi-definite absolute overlaps. The next section shows that this is not a limitation of the proof.

\section{The Trine Channel}\label{sec:trine_conv_II}

We show here that, in general, for classical-quantum channels the rate $R_\infty$ might not be achievable by zero-error list codes even with $L\to \infty$, differently from the fully classical setting. The example also also shows that Corollary~\ref{cor_posinnprod} does not hold in general for channels with positive semi-definite absolute overlaps. To show this, we use a channel $\pazocal T$ with states taken from the ubiquitous trine set of Figure~\ref{fig:trine}; see~\cite{holevo-1973} and~\cite{peres-wootte-1991}.
\begin{figure}
	\centering
	\includegraphics[scale=0.9]{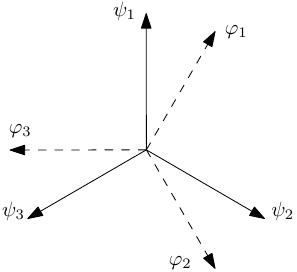}
	\caption{Trine channel. All the vectors lie in a plane; $\psi_1$, $\psi_2$ and $\psi_3$ form angles of $120^\circ$, and each vector $\varphi_i$ is orthogonal to $\psi_i$.}
	\label{fig:trine}
\end{figure}
We first consider the zero-error capacity with list decoding. Since
\vspace{-5pt}
\bb
    A_{\pazocal T}=
    \begin{pmatrix}
        1 & 1/2 & 1/2 \\
        1/2 & 1 & 1/2 \\
        1/2 & 1/2 & 1
    \end{pmatrix}
\ee
is positive semi-definite, Corollary~\ref{cor:matching} applies. By symmetry of the channel under any permutation of the states, and by convexity  in $P$ of the quadratic form in~\eqref{eq:Q_P} (because $A_{\pazocal T}$ is positive semi-definite), the maximizing $P$ in~\eqref{eq:C0LPSDoverlaps} is the uniform distribution, and $C_{0,L}(\pazocal T)=\log \frac{3}{2}$ for any $L\geq 2$. 

We then observe that for this channel the sphere packing bound reduces to a sharp threshold. Setting $\rho_P=\mathbb{E}[\ketbraa{\psi_x}{\psi_x}]$, for uniform $P$ we find $\rho_P=\frac{1}{2}\id$, so that $H(\rho_P)=1$ and $\lambda_{\max}(\rho_P)=1/2$. This implies that $C(\pazocal T)=R_\infty(\pazocal T)=1$. So, for this channel, 
\vspace{-5pt}
\bb
C_{0,L}(\pazocal T)=\log (3/2)<1=R_\infty(\pazocal T)\quad \forall\, L\geq 2\,.
\ee
{Actually, as observed in Remark \ref{remark:M/L}, for this channel the rate $\frac{1}{n}\log(M/L)$ of zero-error codes is upper bounded by $\log (3/2)$ even with exponential list-size.}

As a side remark, we point out that the rate $\log(3/2)$ can also be achieved by zero-error codes with a separable measurement if we allow $L\to\infty$. Indeed, considering vectors $\{\ket{\varphi_x}\}$ as shown in Figure~\ref{fig:trine}, if we apply a measurement with operators $\{\frac{2}{3}\ket{\varphi_x}\bra{\varphi_x}\}$ we can reduce the channel to a classical ternary typewriter channel, for which $R_\infty=\log(3/2)$. Then, this rate is achievable by zero-error codes with list decoding as the list size $L$ grows unbounded \cite{elias-1988}. We point out that, although the proof in~\cite{elias-1988} uses $L\to\infty$, no converse is known showing that $L=2$ does not suffice to achieve $\log(3/2)$ for the ternary typewriter channel. See~\cite{bhanda-khetan-2024a} for the best known result. 

\appendix\label{ap:appendix}
    \begin{proof}[Proof of the decomposition~\eqref{eq:decomposition}]
Let $\tilde P$ and $P$ as in \eqref{eq:PPtilde}. Then we define the matrix $\tilde V=(\tilde V_{rk})_{r\in \tilde R, k\in[M]}$ as follows:
\bb
    \tilde V_{rk}\coloneqq \sqrt{g_{ji}}\delta_{ki}+\sqrt{g_{ij}}\delta_{kj} \quad \text{where}\quad r=(i,j).
\ee
By construction, $\tilde V$ has at most two non-zero elements in each row. Furthermore, using $g_{ij}=g_{ji}^\ast$, it is easy to show that
\begin{align} 
    (\tilde V^\dagger \tilde V )_{k'k}
    &=\begin{cases}
        g_{k'k} & k'\neq k\\
        \sum_{i\neq k}|g_{ik}| & k'=k
    \end{cases}
\end{align}
By \eqref{eq:condition_g},  $\sqrt{g_{kk}-(\tilde V^\dagger \tilde V )_{kk}}$ is real and non-negative for all $k\in[M]$. Then, the matrix $V=(V_{rk})_{r\in P, k\in [M]}$ defined as
\bb\label{eq:V}
    V_{rk}\coloneqq \begin{cases}
        \tilde V_{rk} & r\in \tilde P\\
        \sqrt{g_{kk}-\sum_{i\neq k}|g_{ik}|}\;\delta_{r,(k,k)} & r\in P\setminus \tilde P
    \end{cases}
\ee
clearly satisfies the property $V^\dagger V = G$. Moreover, $V$ is an extension of $\tilde V$ obtained by adding rows with at most one non-zero element. Since $|P|=M(M+1)/2$, we have completed the proof.
\end{proof}

\bibliographystyle{IEEEtran}
\bibliography{biblio_MD,biblio }

\end{document}